\documentclass[final, 10pt, twocolumn, conference]{IEEEtran}
\usepackage{setspace}

\usepackage{amsmath}
\usepackage{amsthm}
\usepackage{amssymb}
\usepackage{cite}
\usepackage{graphicx}
\usepackage{epstopdf}
\usepackage{url}
\usepackage{cite}
\usepackage{texdef2015}
\usepackage{color}
\usepackage{caption}
\usepackage{subcaption}
\usepackage{float}
\allowdisplaybreaks
\usepackage{tikz}
\usetikzlibrary{automata,arrows,positioning,calc,fit,shapes.multipart,chains,shapes}
\usepackage{microtype}


\newtheorem{theorem}{Theorem}
\newtheorem{corollary}{Corollary}

\newtheorem{lemma}{Lemma}

\newcommand{\age}{\Delta}

\newcommand{\negfigspace}{\vspace{-2mm}}


\newcommand{\cf}{c}
\newcommand{\rqr}{r}
\newcommand{\wqr}{w}
\newcommand{\rset}{\mathcal{R}}
\newcommand{\wset}{\mathcal{W}}

\newcommand{\pscale}{\gamma}
\newcommand{\pshape}{\sigma}
\newcommand{\eos}{\mu}

\newcommand{\Esmall}[1]{\textrm{E}[#1]}

\begin{document}
\title{Minimizing Content Staleness in Dynamo-Style Replicated Storage Systems}
 \author{Jing Zhong, Roy D.~Yates and Emina Soljanin \\
 \small 
 Dept. of ECE, Rutgers University, \{jing.zhong, ryates, emina.soljanin\}@rutgers.edu}


\maketitle

\begin{abstract}
Consistency in data storage systems requires any read operation to return the most recent written version of the content.
In replicated storage systems, consistency comes at the price of delay due to large-scale write and read operations.
Many applications with low latency requirements tolerate data staleness in order to provide high availability and low operation latency. 
Using age of information as the staleness metric, we examine a data updating system in which real-time content updates are replicated and stored in a Dynamo-style quorum-based distributed system.
A source sends updates to all the nodes in the system and waits for acknowledgements from the earliest subset of nodes, known as a write quorum.
An interested client fetches the update from another set of nodes, defined as a read quorum.
We analyze the staleness-delay tradeoff in replicated storage by varying the write quorum size.  
With a larger write quorum, an instantaneous read is more likely to get the latest update written by the source. 
However, the age of the content written to the system is more likely to become stale as the write quorum size increases.
For shifted exponential distributed write delay, we derive the age optimized write quorum size that balances the likelihood of reading the latest update and the freshness of the latest update written by the source. 

\end{abstract}

\section{Introduction}
In modern distributed storage systems, data is often replicated across multiple machines or datacenters to support fault tolerance due to server failures, and provide high availability by delivering data through replica servers.
In order to overcome asynchrony in distributed storage systems, {\it quorum}-based algorithms \cite{Attiya1995,Peleg1995,Malkhi1997,Malkhi1998} are well studied and widely used in practice to ensure write and read consistency of replicated data.
In a quorum system, either a write or read of the data goes to a subset of nodes. 
More specifically, the data source or writer sends replicas of the data to all the nodes but only waits for the response from a subset of nodes known as the \emph{write quorum} $\wset$. The client or reader fetches the data from a possibly different subset of nodes, which is called the \emph{read quorum} $\rset$.  
In order to guarantee strict consistency that every read operation returns the most recent written content, a traditional quorum system requires the write quorum $\wset$ and read quorum $\rset$ to overlap in at least one element.
This is known as a \emph{strict quorum}. 
When a quorum is randomly selected by the writer and reader, a strict quorum requires that the write quorum size $\wqr$ and read quorum size $\rqr$ satisfy 
$\wqr + \rqr > n$,
where $n$ is the number of servers/nodes in the system.

However, as the write or read operation to a set of nodes experiences varying random delays, consistency of the storage system comes at the price of delay. 
A wide range of applications have crucial delay requirements, e.g., every 100ms of extra latency cost Amazon $1\%$ in sales, and an extra 0.5s delay in search results cuts Google's traffic by $20\%$ \cite{latencycost}.   
It has been shown that a non-strict or partial quorum of reduced size is widely used in practice because of the latency benefit despite a minor loss in consistency \cite{Bailis2012}.
Amazon's Dynamo database \cite{Dynamo}, and a variety of subsequent database implementations such as Apache Cassandra \cite{Cassandra}, use a non-strict quorum as the data replication mechanism in order to maintain a balance between consistency and latency.

Since strict consistency is not guaranteed in partial quorum systems, the level of consistency is quantified by data \emph{staleness}. 
The definition of data staleness falls into two categories: 1) staleness in time \cite{Golab2011} \cite{Rahman2017} and 2) staleness in data version \cite{Bailis2012}.
In \cite{Golab2011}, a read is considered stale if the value returned was written more than $\delta$ time units before the most recent write, where $\delta$ is a pre-determined threshold. 
In a slightly different time-based staleness definition \cite{Rahman2017}, the data is considered fresh if it was generated no more than $\delta$ time units ago or it's the most recent written data in the system.
On the other hand, \cite{Bailis2012} measures the staleness by how many versions the value returned by a read lags behind the most recent write.

In this work, we characterize data staleness from a strictly time-based perspective.
We examine data monitoring and gathering systems in which real-time content updates generated by a source are stored in a quorum-based distributed database, and a client connects to the database and requests the most recent data update. 
Since the stored data in a node is desired to be as recent as possible, a write operation to a quorum of nodes can be seen as an information update by the source. 
In these applications, the freshness/staleness of the information updates is measured by an ``Age of Information" (AoI)  timeliness metric \cite{Kaul2012infocom,Costa2014,Huang2015,Sun2016,Najm2017}.
If a client reads data by fetching from a set of nodes at some time $t$, and the data has a version time-stamped $u(t)$, then the {\em age} of the data at the client is $t-u(t)$.
We note that the age in time differs from other staleness metrics in distributed storage systems, since we take the age of the most recent written content into account. 
We start by first considering the baseline problem:
in a distributed storage system with $n$ nodes, how does the size of the write quorum $\wqr$ and read quorum $\rqr$ affect the average age of the content returned by a read?

\section{System Model and Metric}\label{sec:system}

\begin{figure}[t]
\centering\small
\begin{tikzpicture}[node distance=1cm]
\node [draw,circle, rounded corners,align=center] (newsource) {Source};
\node[draw,rectangle,rounded corners,align=center,minimum height=4mm,thick] (node_2)[right = 4 of newsource] {Node 2};
\node[draw,rectangle,rounded corners,align=center,minimum height=4mm,thick] (node_1)[above = 0.1 of node_2] {Node 1};
\node[draw,rectangle,rounded corners,align=center,minimum height=4mm,thick] (node_3)[below = 0.1 of node_2] {Node 3};
\node[draw,rectangle,rounded corners,align=center,minimum height=4mm,thick] (node_n)[below = 1.4 of node_2] {Node n};
\draw[->,thick] (newsource.east) -- node[draw,rectangle,minimum height=3mm,thin][above left]{$j$+1} ++(1.5,0) |-  (node_1.west) node[draw,rectangle,minimum height=3mm,thin] [above left = 0 and 0.2]{$j$};
\draw[->,thick] (newsource.east) -- ++(1.5,0) |- (node_3.west)node[draw,rectangle,minimum height=3mm,thin,opacity=.6] [above left = 0 and 0.7]{$j$};
\draw[->,thick] (newsource.east) -- (node_2.west) node[draw,rectangle,minimum height=3mm,thin,opacity=.6] [above left = 0 and 1]{$j$}; 
\draw[->,thick] (newsource.east) -- ++(1.5,0) |- (node_n.west) node[draw,rectangle,minimum height=3mm,thin] [above left = 0 and 0.3]{$j$};
\path (node_3) -- node {$\vdots$}  (node_n);
\node[draw,circle,rounded corners,align=center,] (client)[right = 1 of node_2] {client};
\draw[->,thick] (node_1.east) -- (client);
\draw[->,thick] (node_3.east) -- (client);
\draw[->,thick] (node_n.east) -- (client);
\end{tikzpicture}
\caption{Dynamo-style distributed storage: the source sequentially writes content updates to multiple nodes with random write delays. The next write $j+1$ is initiated right after update $j$ is written to $\wqr$ out of $n$ nodes. The client reads the content through a random set of $\rqr$ nodes and selects the freshest version.}
\label{fig:sysmodel}
\negfigspace
\end{figure}
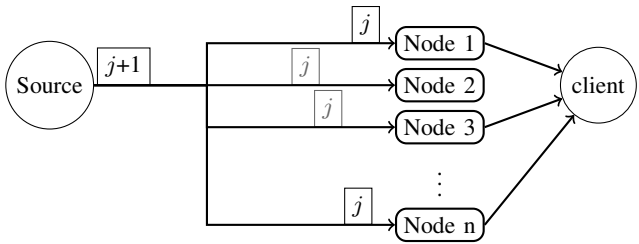

\subsection{Staleness in Dynamo-style Systems}

In a Dynamo-style replicated quorum system, a write/read request by a user will be replicated and sent to \emph{all} the nodes in the system. 
The write operation is considered completed only if the user receives at least $\wqr$ acknowledgements from the $n$ nodes in the system, where $\wqr$ is the pre-determined write quorum size. 
Similarly, the read operation is successful only if the user gets $\rqr$ responses from the system, where $\rqr$ is the read quorum size.
In this paper, we consider a Dynamo-style replicated quorum system with a writing source and a reading client as shown in Fig.~\ref{fig:sysmodel}.
The source writes a time-stamped data item by replicating it to $n$ different nodes in the system with independent random write delays.
An update takes time $X_i$ to be written to node $i$. We refer to $X_i$ as the write delay for node $i$.
We assume that the $X_i$ are i.i.d. shifted exponential $(\lambda,c)$ random variables. Consequently, each $X_i$ has CDF 
\begin{align}
	F_{X}(x)~=~1-e^{-\lambda(x-c)}, \quad x\geq c.
\end{align}
The constant time shift $c>0$ captures the delay produced by the update generation and assembly process.
On the other hand, $\cf$ can also represent a propagation delay on top of an exponential network delay if the source and database are geographically separated.
When freshest received data at time $t$ at node $i$ is time-stamped time $u_i(t)$, the age of information or simply the {\it age} at node $i$, is the random process $\age_i(t)=t-u_i(t)$. 
When a data item reaches node $i$, $u_i(t)$ is advanced to the timestamp of the new content and the node sends an acknowledgement to the source through a feedback channel that is assumed to be instantaneous. 
Assuming that the data item is a time-sensitive content update, the source then obeys a zero-wait policy and initiates a new write request as soon as the current write is completed. 
At the same time, the write operation to the remaining $n-\wqr$ nodes are canceled.

To read data, a client connects to a random set of $\rqr$ nodes in the system, and selects the freshest content among all $\rqr$ nodes.
In \cite{Bailis2012}, the average read latency in Basho Riak \cite{riak}, a commercial distributed database, is shown to be an order of magnitude smaller than the average write latency.
Hence, we assume the read process is instantaneous with zero delay in this work. 
Under this model,  the age at the read client at time $t$ is defined as the minimum age over all the nodes in the read quorum $\rset$, i.e.
\begin{align}
\age(t) = \min_{i\in\rset} \age_i(t). \label{eqn:agedef}
\end{align}
Since the write delays are i.i.d. for each node $i$ and content update $j$, the $\age_i(t)$ processes at each node are statistically identical. The age processes for different read quorum $\rset$ are also statistically identical since $\rset$ is randomly chosen.  
The time average of age process at the client is then given by
\begin{align}
\age = \lim_{\tau\to\infty} \frac{1}{\tau} \int_{t=0}^{\tau} \age(t).
\end{align}

The mathematical model we consider is also relevant to other status updating systems, e.g., multicast with HARQ, where single source transmits coded update packets to multiple clients \cite{Yates2017}.
In related work \cite{Bedewy2017}, update messages are replicated and sent to the receiver through multiple servers; given a general packet arrival process and memoryless packet service times, it was shown that Last-Generated First-Serve scheduling policy is age-optimal. 
In \cite{Sang2017}, a pull-based updating system is considered, in which the arriving source updates are sent to multiple servers and the interested user fetches the update by sending replicated requests to all the servers.
Similar to this work, it was shown there exists an optimal number of responses $k$ from $n$ servers for the user to wait for.
The problem considered in this work differs from \cite{Sang2017} by allowing the source to control when to submit an update based on the delivery feedback.

\subsection{Order Statistics}
We first introduce the notion of order statistics for i.i.d. random variables which plays a key role in our analysis.
We denote the $k$-th order statistic of the random variables $X_1, \ldots, X_n$, i.e., the $k$-th smallest variable, as $X_{k:n}$.

\begin{lemma} {\rm\cite{Arnold2008}}
	For shifted exponential random variable $X$ with CDF $F_{X}(x)~= 1-e^{-\lambda(x-\cf)}, x\geq \cf$., the expectation and variance of the order statistics $X_{k:n}$ are given by 
	\begin{align}
	\E{X_{k:n}} & = \cf+\frac{1}{\lambda}(H_n-H_{n-k}) \label{eqn:os1_exp} \\
	\Var{X_{k:n}} & = \frac{1}{\lambda^2}\left(H_{n^2}-H_{(n-k)^2}\right), \label{osvar_exp}
	\end{align}
	where $H_n$ and $H_{n^2}$ are the generalized harmonic numbers defined as $H_n = \sum_{j=1}^{n}\frac{1}{j}$ and $H_{n^2} = \sum_{j=1}^{n}\frac{1}{j^2}$. 
\end{lemma}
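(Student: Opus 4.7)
The plan is to reduce to the standard exponential case by writing $X_i = \cf + Y_i$, where $Y_1,\ldots,Y_n$ are i.i.d.\ Exp$(\lambda)$. Since addition of the constant $\cf$ preserves ordering, $X_{k:n} = \cf + Y_{k:n}$, so variance is unchanged and the mean simply picks up the shift $\cf$. All that remains is to establish the two formulas for the purely exponential case.

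First I would invoke the classical Rényi representation of exponential order statistics via spacings. By the memoryless property, the spacings $Y_{j:n} - Y_{j-1:n}$ (with $Y_{0:n} := 0$) are independent, and $Y_{j:n} - Y_{j-1:n}$ is distributed as Exp$((n-j+1)\lambda)$, because conditional on the value of the previous minimum the remaining $n-j+1$ survivors are i.i.d.\ Exp$(\lambda)$ shifted by that value, and the minimum of $n-j+1$ such exponentials is Exp$((n-j+1)\lambda)$. Therefore one has the telescoping representation
\begin{equation}
Y_{k:n} \;=\; \sum_{j=1}^{k} \frac{E_j}{(n-j+1)\lambda},
\end{equation}
where $E_1,\ldots,E_k$ are i.i.d.\ Exp$(1)$.

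Next I would apply linearity of expectation and independence of the spacings. This yields
\begin{equation}
\E{Y_{k:n}} \;=\; \frac{1}{\lambda}\sum_{j=1}^{k}\frac{1}{n-j+1}
\;=\; \frac{1}{\lambda}\sum_{i=n-k+1}^{n}\frac{1}{i} \;=\; \frac{1}{\lambda}(H_n - H_{n-k}),
\end{equation}
and, because the spacings are independent,
\begin{equation}
\Var{Y_{k:n}} \;=\; \frac{1}{\lambda^2}\sum_{j=1}^{k}\frac{1}{(n-j+1)^2} \;=\; \frac{1}{\lambda^2}\bigl(H_{n^2}-H_{(n-k)^2}\bigr).
\end{equation}
Adding back the deterministic shift $\cf$ gives \eqref{eqn:os1_exp} and leaves \eqref{osvar_exp} untouched.

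There is no genuine obstacle here; the only subtle point is justifying the spacings representation, which is a standard consequence of the memoryless property and is already cited to \cite{Arnold2008}. I would therefore keep that justification brief (one sentence invoking memorylessness) and let the two index-substitution manipulations $\sum_{j=1}^{k}(n-j+1)^{-s} = H_{n^s} - H_{(n-k)^s}$ for $s\in\{1,2\}$ carry the remainder of the argument.
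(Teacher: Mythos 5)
Your proof is correct: the reduction $X_{k:n}=\cf+Y_{k:n}$ followed by the R\'enyi spacings representation of exponential order statistics gives exactly the stated mean and variance. The paper itself offers no proof of this lemma (it is quoted from \cite{Arnold2008}), and your argument is precisely the standard derivation found in that reference, so there is nothing to reconcile.
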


\section{Age Analysis} \label{sec:analysis}

\begin{figure}[t]
\centering
\begin{tikzpicture}[scale=0.22]
\draw [fill=lightgray, ultra thin, dashed] (0,0) to (11,11) to (11,4) to (7,0);
\draw [fill=lightgray, ultra thin, dashed] (17,0) to (25,8) to (25,2) to (23,0);
\draw [<-|] (0,12) node [above] {$\age_{(\wqr,\rqr)}(t)$} -- (0,0) -- (13,0);
\draw [|->] (14,0) -- (28,0) node [right] {$t$};
\draw (0,0) (7,0) -- +(0,-0.5) (17,0) -- +(0,-0.5) (23,0) -- +(0,-0.5);
\draw 
(3,0) node {$\bullet$}  
(7,0) node [below] {$T_1$}
(11,0) node {$\bullet$}
(17,0) node [below] {$T_{j-1}$} 
(20,0) node {$\bullet$} 
(23,0) node [below] {$T_{j}$}
(25,0) node {$\bullet$};
\draw  [<-] (4,2) to [out=60,in=290] (5,7) node [above] {$A_1$};
\draw[<-] (21,2) to [out=60,in=280] (22,7) node [above] {$A_{j}$};
\draw [very thick] (0,4) -- (3,7) -- (3,3)  -- (7,7) 
-- (11,11)-- (11,4) --(13,6);
\draw [very thick] (14,3) -- (20,9) -- (20,3)  -- (25,8) -- (25,2) -- (28,5); 
\draw  [|<->|] (0,10) to node [above] {$Y_{1}$} (7,10);
\draw  [|<->|] (17,10) to node [above] {$Y_{j}$} (23,10);
\end{tikzpicture}
\caption{\small Sample path of the age $\age_{(\wqr,\rqr)}(t)$ with strict quorum $\wqr+\rqr>n$. Update delivery instances are marked by $\bullet$.}
\label{fig:sawtooth_strict}
\end{figure}
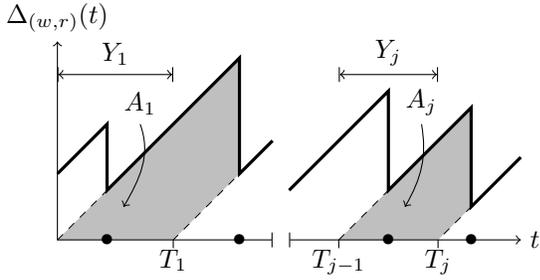  

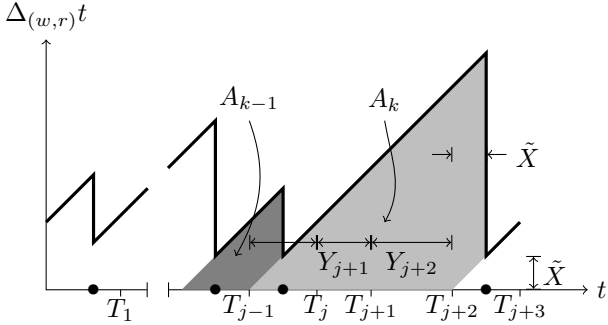
\begin{figure}[t]
\centering
\begin{tikzpicture}[scale=0.09]
\draw [<-|] (0,37) node [above] {$\age_{(\wqr,\rqr)}{t}$} -- (0,0) -- (15,0);
\draw [|->] (18,0) -- (80,0) node [right] {$t$};
\draw (0,0) (11,0) -- +(0,-1) (30,0) -- +(0,-1) (40,0) -- +(0,-1) (48,0) -- +(0,-1) (60,0) -- +(0,-1) (70,0) -- +(0,-1) ;
\fill[lightgray] (30,0) to ++(10,10) to ++(25,25) to ++(0,-30)  to ++(-5,-5);
\fill[gray] (20,0) to (35,15) to (35,5) to (30,0) ;
\draw (7,0) node {$\bullet$}
(25,0) node {$\bullet$}
(35,0) node {$\bullet$}
(65,0) node {$\bullet$};
\draw (11,-6) node [above] {$T_1$} 
(30,-6) node [above] {$T_{j-1}$} (40,-6) node [above] {$T_{j}$} (48,-6) node [above] {$T_{j+1}$} (60,-6) node [above] {$T_{j+2}$} (70,-6) node [above] {$T_{j+3}$};
\draw [very thick] (0,10) -- ++(7,7) -- ++(0,-10) -- ++(8,8);
\draw [very thick]  (18,18) -- ++(7,7)  -- ++(0,-20) -- ++(7,7) -- (35,15) -- ++(0,-10) -- ++(3,3) --  ++(2,2) -- ++(25,25) -- ++(0,-30) -- ++(5,5);
\draw  [|<->|] (30,7) to node [below] {} (40,7);
\draw  [|<->|] (40,7) to node [below] {$Y_{j+1}$} (48,7);
\draw  [|<->|] (48,7) to node [below] {$Y_{j+2}$} (60,7);
\draw [|<-] (60,20) to (57,20);
\draw [|<-] (65,20) to (68,20) node [right] {$\Xtil$};
\draw  [|<->|] (72,0) to node [right] {$\Xtil$} (72,5);
\draw  [<-] (28,5) to [out=60,in=290] (30,25) node [above] {$A_{k-1}$};
\draw[<-] (50,10) to [out=60,in=280] (50,25) node [above] {$A_{k}$};
\end{tikzpicture}
\caption{\small Sample path of the age $\age_{(\wqr,\rqr)}(t)$  with non-strict quorum $\wqr+\rqr\leq n$: successful updates (at times marked by $\bullet$) occur in intervals $1$, $j-1$, $j$, and $j+3$.}
\label{fig:sawtooth_part}
\end{figure}

Our objective is to obtain the average of the age \eqref{eqn:agedef} of the data content given by a read at any time $t$, which is determined by the minimum age over all nodes in the read quorum $\rset$.
Since the read operation is instantaneous by assumption, this is equivalent to evaluating the time-averaged minimum age of any set of $\rqr$ nodes, where $\rqr \in \{1,2,\ldots,n\}$.

With a write quorum of size $\wqr\in\{1,2,\ldots, n\}$, the source considers the current write committed and initiates the next write operation only after the current update is successfully written to $\wqr$ out of $n$ nodes. 
In this case, we denote the write delay for node $i$ and content update $j$ as $X_{ij}$, and the total write delay for update $j$ as $Y_j$.
Note that $Y_j$ is statistically identical to $X_{\wqr:n}$
If one node gets the content earlier than any of the other nodes, it has to wait for an idle period until that content is written to $\wqr-1$ other nodes.
Otherwise, if update $j$ is not written to a node $i$, the node waits for time $Y_j$ until the source starts the next write, and we refer to this random waiting time as a {\it write interval} for $\wqr$ nodes.

For a strict quorum system with $\rqr+\wqr > n$, there is at least one overlapping node between the read quorum $\rset$ and the write quorum $\wset$, thus every update will be eventually written to at least one nodes in the read quorum $\rset$. 
That is, if the client reads the data content right after the most recent committed write, it is guaranteed to receive the most up-to-date update committed by the source.
On the other hand, it is also possible that a read operation returns the most recent write even before the commitment since the client selects the freshest content from all nodes in the read quorum.

Fig. \ref{fig:sawtooth_strict} depicts a sample path of the minimum age of a random set of $\rqr$ nodes which satisfies the strict quorum  requirement. 
The start and end time of the $j$-th write interval are marked as $T_{j-1}$ and $T_{j}=T_{j-1}+Y_{j}$, respectively. 
For update $j$, let's denote the earliest node to complete the write process in the read quorum as node $i$.
The instantaneous age $\age(t)$ drops to exactly the write delay for node $i$, denoted by $X_{ij}$, when the earliest node $i$ receives the update $j$. 
This also implies at time $T_j$ when update $j$ is written to the write quorum $\wset$, the age of the read quorum is $\age(T_i)=Y_j$.

For a non-strict or partial quorum system with $\rqr+\wqr\leq n$, an update by the source may not be successfully read by the client because the random sets $\wset$ and $\rset$ can be disjoint. 
Suppose an update is written to at least one node in the read quorum $\rset$ during write interval $j$ and the next successful write to read quorum $\rset$ is in write interval $j+M$. 
In this case, $M$ is a geometric r.v. with probability mass function (PMF) $P_M(m) = (1-p)^{m-1} p, m\geq 1$. Thus $M$ has first and second moments
\begin{align}
\E{M} & = \frac{1}{p}, \qquad \E{M^2} = \frac{2-p}{p^2}. \label{eqn:EM}
\end{align}
A fresh update fails to be written to the client during a write interval if and only if the write quorum $\wset$ and read quorum $\rset$ do not overlap. 
Let's denote the probability of a write failure as $q=1-p$, then
\begin{align}
q 
&= \Pr\{\rset\cap\wset = \emptyset\} = \begin{cases}
0, & \rqr+\wqr > n \\ 
\binom{n-w}{\rqr}/\binom{n}{\rqr}, & \rqr+\wqr \leq n.
\end{cases}  \label{eqn:q}
\end{align}

A similar example of the age process is shown in Fig. ~\ref{fig:sawtooth_part}. 
We represent the area under the age sawtooth as the concatenation of the polygons $A_1,\ldots,A_k$ as shown in Figs.~\ref{fig:sawtooth_strict} and~\ref{fig:sawtooth_part}. 
The update $j$ is written to the read quorum $\rset$ in the write interval $j$, and the read quorum waits for $M_k=3$ write intervals until the next successful write. 
Note that a strict quorum can be viewed as a special case with deterministic $M_k=1$.
Denote the random variable $\Xtil$ as the write delay of a successful update written to at least one node in the read quorum $\rqr$. Evaluating Fig.~\ref{fig:sawtooth_part} gives the area 
\begin{align}
	A_k = \frac{1}{2} \left(\sum_{l=j}^{j+M_k-1} Y_l + \Xtil_k\right)^2 - \frac{1}{2}\Xtil_k^2. \label{eqn:Ak}
\end{align}
\begin{lemma}
	The average area $A_k$ as shown in Fig. \ref{fig:sawtooth_part} is
	\begin{align*}
		\E{A} 
		&= 	\mathrm{E}\bigl[\Xtil\bigr]\E{M}\E{Y} \nn
		&  \qquad + \frac{1}{2}\E{M^2}(\E{Y})^2 + \frac{1}{2}\E{M}\Var{Y}. 
	\end{align*}
	\label{eqn:EA}
\end{lemma}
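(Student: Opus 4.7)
My plan is to expand the square in \eqref{eqn:Ak} and rewrite the area as
\begin{align*}
A_k = \tfrac{1}{2}S_k^2 + S_k\,\Xtil_k,\qquad S_k = \sum_{l=j}^{j+M_k-1} Y_l,
\end{align*}
so that $\E{A} = \tfrac{1}{2}\E{S_k^2} + \E{S_k\Xtil_k}$. I would then recognize $S_k$ as a random sum of i.i.d.\ write-interval lengths $Y_l \stackrel{d}{=} X_{\wqr:n}$ whose count $M_k$ is the geometric stopping variable whose moments are given in \eqref{eqn:EM}, and evaluate the two terms using standard random-sum moment identities.

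For the quadratic piece, conditioning on $M_k$ and applying the law of total variance yields $\Var{S_k} = \E{M}\Var{Y} + \Var{M}(\E{Y})^2$, so that $\E{S_k^2} = \E{M}\Var{Y} + \E{M^2}(\E{Y})^2$ after adding back $(\E{M}\E{Y})^2$. For the cross term, I would observe that $\Xtil_k$ is determined by write delays within the successful interval $j+M_k$ only, whereas $S_k$ aggregates the durations of the strictly prior, disjoint intervals; since the write delays are independent across updates, $\Xtil_k$ and $S_k$ are independent, and Wald's first identity gives $\E{S_k \Xtil_k} = \mathrm{E}[\Xtil]\,\E{M}\,\E{Y}$. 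Summing the two contributions reproduces the claimed formula.

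The main subtlety I expect is rigorously justifying the independence structure, in particular that $M_k$ is independent of the $Y_l$'s. The event that a given interval is a ``failure'' (i.e., $\wset \cap \rset = \emptyset$) depends only on the identities of the first $\wqr$ nodes to respond, not on the actual value $Y_l = X_{\wqr:n}$; for i.i.d.\ continuous write delays, the permutation ranking the nodes by completion time is uniform on the symmetric group and independent of the order statistics. This observation lets me treat the per-interval failure indicators as i.i.d.\ Bernoulli$(q)$ with $q$ from \eqref{eqn:q}, independent of the $Y_l$, which in turn gives $M_k$ its geometric law with the moments in \eqref{eqn:EM} and supplies all the independence needed to apply the Wald-type identities above.
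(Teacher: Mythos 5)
Your proof is correct and follows essentially the same route as the paper's: expand the square in \eqref{eqn:Ak}, treat $S_k$ as a geometric random sum so that $\E{S_k^2}=\E{M^2}(\E{Y})^2+\E{M}\Var{Y}$, and evaluate the cross term as $\mathrm{E}[\Xtil]\E{M}\E{Y}$ via independence. The only difference is that you explicitly justify the independence of the failure indicators (hence $M_k$) from the interval lengths via the rank/order-statistic argument, which the paper simply asserts.
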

\vspace{-5mm}
\begin{proof} Defining  $W=\sum_{l=j}^{j+M_k-1} Y_l$,
	\eqref{eqn:Ak} can be rewritten as 
	\begin{align*}
		A_k = \frac{1}{2} \Bigr[ W^2
        + 2 \Xtil_{k-1}W 
        + \Xtil_{k}^2 \Bigr] - \frac{1}{2}\Xtil_{k}^2.
	\end{align*}
	Since $M_k$ and the $Y_j$  are independent, 
		$\E{W} = \E{M} \E{Y}$.
	It follows that
	\begin{align}
		\E{A_k} = \frac{1}{2} \E{W^2}
        + \Esmall{\Xtil} \E{M} \E{Y}. \label{eqn:EAproof}
	\end{align}
	 The random sum of random variables $W$ has second moment 
	\begin{align*}
		\E{W^2} & = (\E{W})^2 + \Var{W} \nn
		& = (\E{M})^2 (\E{Y})^2 + \E{M}\Var{Y} + \Var{M}(\E{Y})^2 \nn
		& = \E{M^2} (\E{Y})^2 + \E{M}\Var{Y}.
	\end{align*}
	Substituting $\E{W^2}$
    back into \eqref{eqn:EAproof} completes the proof.
\end{proof}

It follows from Fig. \ref{fig:sawtooth_part} that the average age is given by
\begin{align}
\age_{(\wqr,\rqr)} = \frac{\E{A}}{\E{M}\E{Y}}. \label{eqn:age_raw}
\end{align}

\begin{theorem}
Consider a Dynamo-style $n$-node quorum system with write quorum size $\wqr$ and read quorum size $\rqr$. The source sequentially writes content updates to the system. 
Assuming the freshest content is selected from the read quorum and the read operation is instantaneous, the average age of the content observed by the client is 
\begin{enumerate}
\item for $\wqr+\rqr >n$,
\begin{align*}
\age_{(\wqr,\rqr)} & = \sum_{i=1}^{\wqr} \E{X_{i:n}} \frac{\binom{n-i}{\rqr-1}}{\binom{n}{\rqr}}
+ \frac{1}{2} \frac{ \E{X^2_{\wqr:n}}}{\E{X_{\wqr:n}}};
\end{align*}
\item for $\wqr+\rqr \leq n$,
\begin{align*}
\age_{(\wqr,\rqr)} & = \sum_{i=1}^{\wqr} \E{X_{i:n}} \frac{\binom{n-i}{\rqr-1}}{\binom{n}{\rqr}-\binom{n-\wqr}{\rqr}} \nn
& \quad + \frac{1}{2} \frac{\binom{n}{\rqr}+\binom{n-\wqr}{\rqr}}{\binom{n}{\rqr}-\binom{n-\wqr}{\rqr}} \E{X_{\wqr:n}} + \frac{1}{2} \frac{ \Var{X_{\wqr:n}}}{\E{X_{\wqr:n}}}. 
\end{align*}
\end{enumerate}
\label{thm:age_w_r}
\end{theorem}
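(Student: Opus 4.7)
The plan is to compute $\age_{(\wqr,\rqr)} = \E{A}/(\E{M}\E{Y})$ by combining Lemma~\ref{eqn:EA} with explicit expressions for its five ingredients. Four are immediate: $Y_j \stackrel{d}{=} X_{\wqr:n}$ gives $\E{Y}$ and $\Var{Y}$ directly from the order statistics lemma, and the first two moments of $M$ come from (\ref{eqn:EM}) with $p = 1-q$ and $q$ as in (\ref{eqn:q}). All that remains is to compute $\E{\tilde{X}}$.

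For $\E{\tilde{X}}$, I would observe that since only the $\wqr$ fastest nodes ever receive a given update, $\tilde{X} = X_{I:n}$ where $I = \min(\rset\cap\wset)$ is the smallest write-completion rank among the nodes of $\rset\cap\wset$, conditioned on that set being nonempty. The key claim is that, because the write delays are i.i.d.\ and $\rset$ is a uniformly random $\rqr$-subset chosen independently of those delays, the set of write-completion ranks of the nodes in $\rset$ is itself a uniformly random $\rqr$-subset of $\{1,\ldots,n\}$, independent of the order statistics $X_{1:n},\ldots,X_{n:n}$. A standard counting argument (fix $i\in\rset$ as the minimum rank, then choose the remaining $\rqr-1$ members of $\rset$ from $\{i+1,\ldots,n\}$) gives $\Pr\{I = i\} = \binom{n-i}{\rqr-1}/\binom{n}{\rqr}$, and dividing by the success probability $1 - q$ yields
\[
\Pr\{I = i \mid \text{success}\} = \frac{\binom{n-i}{\rqr-1}}{\binom{n}{\rqr} - \binom{n-\wqr}{\rqr}}, \quad 1 \leq i \leq \wqr,
\]
so that $\E{\tilde{X}} = \sum_{i=1}^{\wqr}\E{X_{i:n}}\Pr\{I = i \mid \text{success}\}$ by independence.

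The two cases then assemble mechanically. For $\wqr+\rqr>n$ we have $q=0$, $M\equiv 1$, and $\binom{n-\wqr}{\rqr}=0$, so Lemma~\ref{eqn:EA} collapses to $\E{A} = \E{\tilde{X}}\E{Y} + \tfrac12\E{Y^2}$ and dividing by $\E{Y}$ produces the first formula. For $\wqr+\rqr\leq n$, dividing Lemma~\ref{eqn:EA} by $\E{M}\E{Y}$ gives three summands: $\E{\tilde{X}}$ as above; a middle term $\tfrac12(\E{M^2}/\E{M})\E{Y}$ whose coefficient simplifies via $(2-p)/p = (1+q)/(1-q)$ to the ratio $[\binom{n}{\rqr}+\binom{n-\wqr}{\rqr}]/[\binom{n}{\rqr}-\binom{n-\wqr}{\rqr}]$; and the final term $\tfrac12\Var{Y}/\E{Y}$, which passes through unchanged.

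The main obstacle is justifying the independence of the index $I$ from the order statistics $X_{1:n},\ldots,X_{n:n}$, since this is what lets the expectation $\E{X_{I:n}}$ factor as $\sum_i \Pr\{I=i\}\E{X_{i:n}}$. This rests on the exchangeability of the i.i.d.\ delays together with the independence of $\rset$ from them. Once that is in place, the rest is careful bookkeeping of the success-conditioning and routine algebraic manipulation of binomial coefficients.
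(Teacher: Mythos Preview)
Your proposal is correct and follows essentially the same route as the paper: both start from $\age_{(\wqr,\rqr)}=\E{A}/(\E{M}\E{Y})$, expand via Lemma~\ref{eqn:EA}, simplify $\E{M^2}/\E{M}$ to $(1+q)/(1-q)$, and compute $\E{\tilde X}$ by identifying the minimum write-completion rank in $\rset$ and counting $\Pr\{I=i\}=\binom{n-i}{\rqr-1}/\binom{n}{\rqr}$. If anything, you are more explicit than the paper about the exchangeability argument justifying the independence of $I$ from the order statistics, which the paper leaves implicit.
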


\begin{proof}
Substituting Lemma \ref{eqn:EA} into \eqref{eqn:age_raw} yields 
\begin{align}
\age_{(\wqr,\rqr)} & = \Esmall{\Xtil} +   \frac{\E{M^2}}{2\E{M}} \E{Y} + \frac{1}{2} \frac{\Var{Y}}{\E{Y}} , \label{eqn:age_w_r_raw}
\end{align}
where $M$ is the number of write intervals between successful writes to the read quorum. 
With $p=1-q$, it follows from (\ref{eqn:EM}) and (\ref{eqn:q}) that 
\begin{align}
\frac{\E{M^2}}{2\E{M}} 
&= \frac{1+q}{2(1-q)} = \begin{cases}
\frac{1}{2}, & \rqr+\wqr > n \\ 
\frac{1}{2} \frac{\binom{n}{\rqr}+\binom{n-\wqr}{\rqr}}{\binom{n}{\rqr}-\binom{n-\wqr}{\rqr}}, & \rqr+\wqr \leq n.
\end{cases} \label{eqn:EM_r}
\end{align}
Thus, for strict quorum $\wqr+\rqr>n$, \eqref{eqn:age_w_r_raw} can be written as
\begin{align}
	\age_{(\wqr,\rqr)} & = \Esmall{\Xtil} +   \frac{1}{2} \E{Y} + \frac{1}{2} \frac{\Var{Y}}{\E{Y}} \nn
	& =  \Esmall{\Xtil} +  \frac{1}{2} \frac{\E{Y^2}}{\E{Y}}. \label{eqn:age_w_r_rawstrict}
\end{align}
Denote the node $\underline{i}_\rset$ as the node with least write delay in the read quorum $\rset$, i.e.,
\begin{align*}
\underline{i}_\rset = \mathrm{arg} \min_{i\in\rset} X_i.
\end{align*}
In addition, we rewrite the write quorum set as
\begin{align}
	 \wset = \{i_1,i_2,\ldots,i_\wqr\},
\end{align}
where $X_{i_k} = X_{k:n}$ is the $k$-th smallest write delay in the write quorum.

For strict quorum $\wqr+\rqr>n$, the average write delay for a successful update read by the client is given by
\begin{subequations}
\begin{align}
\E{\Xtil} 
&= \E{X_{\underline{i}_\rset}\,|\,\underline{i}_\rset\in\wset} \label{eqn:EXtil0} \\
&= \sum_{k=1}^{\wqr} \E{X_{k:n}} \Pr\{\underline{i}_\rset=i_k \; | \; \underline{i}_\rset \in \wset\}, \label{eqn:EXtil1} \\
&= \sum_{k=1}^{\wqr} \E{X_{k:n}}\frac{\Pr[\underline{i}_\rset=i_k]}{1-q}, \label{eqn:EXtil2} \\
&= \sum_{k=1}^{\wqr} \E{X_{k:n}} \frac{\binom{n-k}{\rqr-1}}{\binom{n}{\rqr}}. \label{eqn:EXtil3}
\end{align}
\end{subequations}
In (\ref{eqn:EXtil0}), the expectation of $\Xtil$ is defined as the expectation of the minimum of all the write delays $X_i$ in the read quorum $\rset$, given that this minimum is also in the write quorum $\wset$. 
(\ref{eqn:EXtil1}) is obtained by averaging over the conditional expectation of all possible order statistics $X_{k:n}$.
And the complementary event of the condition $\underline{i}_\rset\in\wset$ is that both subsets do not overlap, $\rset\cap\wset = \emptyset$, which yields (\ref{eqn:EXtil2}).
From (\ref{eqn:EXtil2}) to (\ref{eqn:EXtil3}), it follows from \eqref{eqn:q} that $q=0$. 
And we have $\Pr[\underline{i}_\rset =i_k] = \binom{n-k}{r-1}/\binom{n}{r}$, since $X_{k:n}$ is smallest in the read quorum $\rset$ with size $\rqr$, and the remaining $r-1$ values are randomly chosen from the subset $\{ X_{k+1:n}, \ldots, X_{n:n}\}$.

Similarly, for the non-strict quorum with $\wqr+\rqr\leq n$, 
\begin{subequations}
\begin{align}
\Esmall{\Xtil} 
&= \E{X_{\underline{i}_\rset}\,|\,\underline{i}_\rset\in\wset} \\
&= \sum_{k=1}^{n-\rqr+1} \E{X_{k:n}} \Pr\{\underline{i}_\rset=i_k \; | \; \underline{i}_\rset \in \wset\},\\
&= \sum_{k=1}^{n-\rqr+1} \E{X_{k:n}} \frac{\Pr[\underline{i}_\rset=i_k]}{1-q}, \label{eqn:EXtilnon2} \\
&= \sum_{k=1}^{n-\rqr+1} \E{X_{k:n}} \frac{\binom{n-k}{\rqr-1}}{\binom{n}{\rqr}-\binom{n-\wqr}{\rqr}}. \label{eqn:EXtilnon3}
\end{align}	
\end{subequations}
From \eqref{eqn:EXtilnon2} to \eqref{eqn:EXtilnon3}, we apply $q=\binom{n-w}{r}/\binom{n}{r} $ in \eqref{eqn:q} for the case $\rqr+\wqr\leq n$. 
Note that the length of a write interval is $Y=X_{\wqr:n}$. 
To get Theorem \ref{thm:age_w_r}, we substitute (\ref{eqn:EXtil3}) back to  \eqref{eqn:age_w_r_rawstrict} for $\wqr+\rqr>n$, and substitute (\ref{eqn:EM_r}) and \eqref{eqn:EXtilnon3} back to (\ref{eqn:age_w_r_raw}) for $\wqr+\rqr\leq n$.

\end{proof}

\begin{corollary}
Let $\beta=1-\alpha=1-\wqr/n$.
For shifted exponential $(\lambda,c)$ write delay $X$ and a given read quorum size $\rqr$, the average age at the client can be approximated for large $n$ as:
\begin{enumerate}
\item for $\wqr+\rqr >n$,
\begin{align*}
\age_{(\wqr,\rqr)} & \approx 
\frac{1-2\beta^\rqr}{2\lambda} \log\frac{1}{\beta} + (1-\beta^\rqr) (\cf+\frac{1}{\lambda\rqr}) + \frac{\cf}{2}. 
\end{align*}
\item for $\wqr+\rqr \leq n$,
\begin{align*}
\age_{(\wqr,\rqr)} & \approx
\frac{1}{\lambda r} + \frac{1}{2\lambda}\log\frac{1}{\beta} + c + \frac{c(1+\beta^r)}{2(1-\beta^r)}.
\end{align*}
\end{enumerate} \label{thm:wr_approx}
\end{corollary}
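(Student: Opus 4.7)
The plan is to start from the exact expressions in Theorem~\ref{thm:age_w_r}, substitute the shifted-exponential order statistic moments from \eqref{eqn:os1_exp}--\eqref{osvar_exp}, and pass to the large-$n$ limit under the scaling $\wqr=\alpha n$, $\beta=1-\alpha$. Two asymptotics do essentially all the work: (a) $H_n-H_{n-k}\approx\log(n/(n-k))$, which gives $\E{X_{\wqr:n}}\approx c+\lambda^{-1}\log(1/\beta)$, while $\Var{X_{\wqr:n}}=\lambda^{-2}\sum_{j=n-\wqr+1}^{n}j^{-2}$ is a tail of a convergent series and hence negligible compared to $(\E{X_{\wqr:n}})^2$; and (b) the binomial ratio $\binom{n-\wqr}{\rqr}/\binom{n}{\rqr}=\prod_{i=0}^{\rqr-1}(n-\wqr-i)/(n-i)\to\beta^{\rqr}$, yielding the approximation $q\approx\beta^{\rqr}$.

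The heart of the proof is the weighted sum $S=\sum_{k=1}^{\wqr}\E{X_{k:n}}\binom{n-k}{\rqr-1}/\binom{n}{\rqr}$ that appears, up to normalization, in both cases of Theorem~\ref{thm:age_w_r}. The weights are the PMF of the minimum rank $K$ of a uniform $\rqr$-subset, so $K/n$ converges to a variable $U$ with density $\rqr(1-u)^{\rqr-1}$ on $[0,1]$. Combining this with $H_n-H_{n-K}\approx-\log(1-K/n)$, the sum becomes
\begin{align*}
S\approx c(1-\beta^{\rqr})+\frac{1}{\lambda}\int_0^{\alpha}\bigl(-\log(1-u)\bigr)\rqr(1-u)^{\rqr-1}\,du.
\end{align*}
The substitution $v=1-u$ followed by a single integration by parts reduces the integral to $(1-\beta^{\rqr})/\rqr-\beta^{\rqr}\log(1/\beta)$, giving $S\approx(1-\beta^{\rqr})(c+1/(\lambda\rqr))-(\beta^{\rqr}/\lambda)\log(1/\beta)$.

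For the strict case I would substitute $S$ into the first term of Theorem~\ref{thm:age_w_r}(1) and, using negligible variance, approximate $\tfrac12\E{X_{\wqr:n}^2}/\E{X_{\wqr:n}}\approx\tfrac12\E{X_{\wqr:n}}\approx c/2+(1/(2\lambda))\log(1/\beta)$. Collecting the $\log(1/\beta)$ coefficients then directly reproduces the $(1-2\beta^{\rqr})/(2\lambda)$ factor and the $c/2$ offset in the statement.

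For the non-strict case, dividing $S$ by $1-q\approx 1-\beta^{\rqr}$ gives the approximation for $\E{\tilde{X}}$, and the middle coefficient becomes $(1+\beta^{\rqr})/(2(1-\beta^{\rqr}))$. The only step with any subtlety, and the main obstacle, is the cancellation that produces the clean $1/(2\lambda)$ coefficient of $\log(1/\beta)$: the contribution from $\E{\tilde{X}}$ is $-\beta^{\rqr}/(\lambda(1-\beta^{\rqr}))$ and from the middle term is $(1+\beta^{\rqr})/(2\lambda(1-\beta^{\rqr}))$, and combining over a common denominator yields $(1-\beta^{\rqr})/(2\lambda(1-\beta^{\rqr}))=1/(2\lambda)$, independent of $\beta^{\rqr}$. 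The remaining constant and $1/(\lambda\rqr)$ terms then collect into the stated expression.
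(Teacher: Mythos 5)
Your proposal is correct and follows essentially the same route as the paper's appendix proof: approximate $H_n-H_{n-k}$ by logarithms, discard the variance term since $H_{n^2}$ converges, replace the binomial ratio by $\beta^{\rqr}$, convert the weighted sum over ranks to the integral $\int_0^{\alpha}(\cdot)\,\rqr(1-u)^{\rqr-1}\,du$, and collect terms (including the same $1/(2\lambda)$ cancellation in the non-strict case). The only difference is cosmetic---you justify the sum-to-integral step via the limiting density of the minimum rank, while the paper uses $\binom{n}{k}\approx n^k/k!$ and a Riemann-sum approximation---so no further comment is needed.
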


\begin{corollary} \label{thm:wr_optimal}
Denote $\omega=\beta^r$, the optimal $\omega^*$ that minimizes Corollary \ref{thm:wr_approx} for positive $\lambda$ and $\cf$ is
\begin{align}
\omega^* = (\lambda\cf\rqr+1)-\sqrt{(\lambda\cf\rqr+1)^2-1}.
\end{align} 
\end{corollary}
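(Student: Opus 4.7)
The plan is to reduce the minimization to a one-variable calculus problem. Since the formula $\omega^{\ast}=(\lambda\cf\rqr+1)-\sqrt{(\lambda\cf\rqr+1)^2-1}$ depends only on $\lambda\cf\rqr$, I expect to be working with the non-strict branch of Corollary \ref{thm:wr_approx}: setting $\omega=\beta^{\rqr}$ and using $\log(1/\beta)=-\frac{1}{\rqr}\log\omega$, the $\wqr+\rqr\le n$ expression rewrites as
\begin{equation*}
\age_{(\wqr,\rqr)} \approx \frac{1}{\lambda \rqr} - \frac{\log\omega}{2\lambda \rqr} + \cf + \frac{\cf(1+\omega)}{2(1-\omega)},
\end{equation*}
a smooth function of $\omega\in(0,1)$ in which only the logarithm and the rational term depend on $\omega$.

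Next I would differentiate. The logarithm contributes $-\frac{1}{2\lambda\rqr\omega}$, and a brief computation gives $\frac{d}{d\omega}\frac{1+\omega}{2(1-\omega)}=\frac{1}{(1-\omega)^2}$, so the first-order condition reads
\begin{equation*}
\frac{\cf}{(1-\omega)^2} \;=\; \frac{1}{2\lambda\rqr\omega}.
\end{equation*}
Clearing denominators and rearranging produces the quadratic $\omega^{2}-2(\lambda\cf\rqr+1)\omega+1=0$, whose roots are $(\lambda\cf\rqr+1)\pm\sqrt{(\lambda\cf\rqr+1)^2-1}$. Because the product of the two roots equals $1$ and both are positive, exactly one lies in $(0,1)$, namely the root with the minus sign, which is precisely the claimed $\omega^{\ast}$.

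To close the argument, I would check the second derivative $\frac{1}{2\lambda\rqr\omega^{2}}+\frac{2\cf}{(1-\omega)^{3}}$, which is strictly positive on $(0,1)$, so the age expression is strictly convex in $\omega$ and the unique interior critical point is its unique minimizer. For completeness one can also observe that the strict-quorum branch of Corollary \ref{thm:wr_approx}, differentiated in $\omega$, stays negative on the allowed subinterval $\omega\in(0,(\rqr/n)^{\rqr})$, so the global minimum of the approximation indeed falls in the non-strict region treated above. I do not anticipate any real obstacle: the main task is simply bookkeeping the change of variables $\omega=\beta^{\rqr}$ and selecting the correct quadratic root.
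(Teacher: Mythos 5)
Your proposal is correct and follows essentially the same route as the paper: differentiating the non-strict branch of Corollary \ref{thm:wr_approx} (after the substitution $\omega=\beta^{\rqr}$) produces the same quadratic $\omega^{2}-2(\lambda\cf\rqr+1)\omega+1=0$, whose root in $(0,1)$ is the claimed $\omega^{*}$. Your extra touches --- selecting the root via the product of roots being $1$, the positive second derivative giving strict convexity on $(0,1)$, and ruling out the strict-quorum region by monotonicity rather than by the paper's observation that the stationary point $\beta^{*}=e^{\lambda\cf}>1$ falls outside $(0,1)$ --- only make the argument more complete.
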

Proofs for both corollaries are provided in the appendix.

\section{Evaluation} \label{sec:evaluation}

\begin{figure}[pt]
\centering
\begin{subfigure}[b]{0.5\textwidth}
\centering
\includegraphics[width=\textwidth]{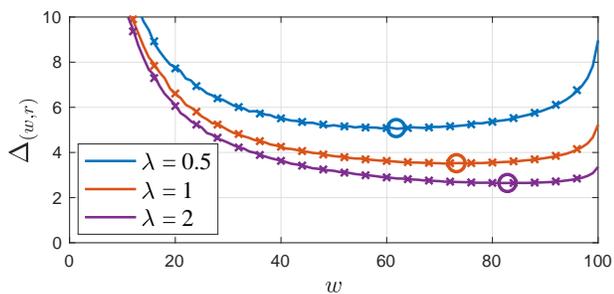}
\caption{read quorum $\rqr=1$, $n=100$.}
\label{fig:r1}
\end{subfigure}
\begin{subfigure}[b]{0.5\textwidth}
\centering
\includegraphics[width=\textwidth]{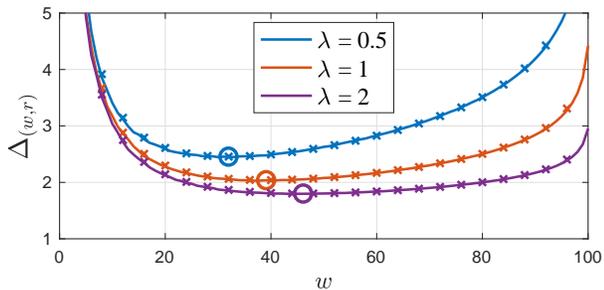}
\caption{read quorum $\rqr=5$, $n=100$.}
\label{fig:r5}
\end{subfigure}
\begin{subfigure}[b]{0.5\textwidth}
\centering
\includegraphics[width=\textwidth]{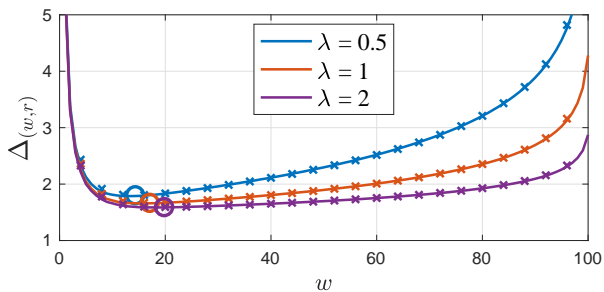}
\caption{read quorum $\rqr=20$, $n=100$.}
\label{fig:r20}
\end{subfigure}
\caption{Average age as a function of write quorum $\wqr$ for shifted exponential write delay. $\times$ marks the approximate average age, and $\circ$ marks the minimized approximate age $\hat{\age}(\wqr^*)$.} 
\label{fig:SExp}
\negfigspace
\end{figure}

Figure \ref{fig:SExp} compares the simulation results of the average age $\age_{(\wqr,\rqr)}$ as a function of the write quorum size $\wqr$ given different pre-determined read quorum $\rqr$. 
In this experiment, the total number of nodes $n=100$, and every write to a node has shifted exponential delay with $\cf=1$ and different $\lambda$. 
The approximation in Corollary~\ref{thm:wr_approx} is marked with $\times$, and the near-optimal write quorum size $\wqr^*$ in Corollary~\ref{thm:wr_optimal} is marked with $\circ$. 
By looking at the three curves in a single figure, We observe that for a given $c$ and $n$, the optimal write quorum size $\wqr^*$ increases as the exponential rate $\lambda$ increases.  
Comparing across figures with different read quorum $\rqr$, we observe that the average age decreases as $\rqr$ increases, and the optimal write quorum size $\wqr$ also decreases significantly.
For example, the client reads the data from only a single node in Fig. \ref{fig:r1}. In this case, in order to minimize the staleness of the content update, the source should consider a write operation to be complete once the update is written to 60 out of 100 nodes when $\lambda=0.5$.
However, if the client is connected to 5 nodes as shown in Fig. \ref{fig:r5}, it is best for the source to only write to around 30 nodes.
This observation also implies that choosing a partial/non-strict quorum is usually a winning strategy to minimize the content staleness in replicated storage system. 
Although strict quorum guarantees that any read after the write commit can return the most recent write, the content written to the system is more likely to become stale due to larger write delay.

\section{Conclusion} \label{sec:conclusion}
We have examined a real-time data replicated storage system in which content updates are replicated and stored in a dynamo-style quorum systems.
Either a write or read request goes to all the nodes in the system, and it is considered complete once there are at least $\wqr$ or $\rqr$ responses.
Assuming that the write delay dominates the latency, the freshness of the replicated storage system is measured by the average age of the content returned by a read at any time $t$.
As the write quorum size $\wqr$ increases, an instantaneous read from a read quorum with predetermined size $\rqr$ is more likely to get the latest version generated by the source. 
However, the age of the content also increases as the write quorum size $\wqr$ increases.
We have derived the optimal $\wqr$ given a read quorum size $\rqr$ such that the average age of the content returned by a read is minimized, and showed by experiment that the optimal $\wqr$ satisfies the non-strict quorum $\wqr+\rqr\leq n$.

The analysis presented in this work is based on the assumption that the read delay is negligible compared to the write delay, and the feedback channels from the nodes are instantaneous. 
We are also aware of more general cases where the read delay is also significant and random such that the content may become stale during the read process.
Under this scenario, a different write quorum $\wqr$ should be chosen to deal with the possibility of stale data due to the read delay.

\section*{Acknowledgment}
\addcontentsline{toc}{section}{Acknowledgment}
Part of this research is based upon work supported by the National Science Foundation under grant CNS-1422988.

\bibliographystyle{IEEEtran}
\bibliography{ref}

\begin{thebibliography}{10}
\providecommand{\url}[1]{#1}
\csname url@samestyle\endcsname
\providecommand{\newblock}{\relax}
\providecommand{\bibinfo}[2]{#2}
\providecommand{\BIBentrySTDinterwordspacing}{\spaceskip=0pt\relax}
\providecommand{\BIBentryALTinterwordstretchfactor}{4}
\providecommand{\BIBentryALTinterwordspacing}{\spaceskip=\fontdimen2\font plus
\BIBentryALTinterwordstretchfactor\fontdimen3\font minus
  \fontdimen4\font\relax}
\providecommand{\BIBforeignlanguage}[2]{{%
\expandafter\ifx\csname l@#1\endcsname\relax
\typeout{** WARNING: IEEEtran.bst: No hyphenation pattern has been}%
\typeout{** loaded for the language `#1'. Using the pattern for}%
\typeout{** the default language instead.}%
\else
\language=\csname l@#1\endcsname
\fi
#2}}
\providecommand{\BIBdecl}{\relax}
\BIBdecl

\bibitem{Attiya1995}
H.~Attiya, A.~Bar-Noy, and D.~Dolev, ``{Sharing Memory Robustly in
  Message-Passing Systems.}'' \emph{J.\ ACM}, pp. 1--12, 1995.

\bibitem{Peleg1995}
D.~Peleg and A.~Wool, ``The availability of quorum systems,'' \emph{Information
  and Computation}, vol. 123, no.~2, pp. 210--223, 1995.

\bibitem{Malkhi1997}
D.~Malkhi, M.~Reiter, and R.~Wright, ``Probabilistic quorum systems,'' in
  \emph{Proceedings of ACM Symposium on Principles of Distributed Computing},
  1997, pp. 267--273.

\bibitem{Malkhi1998}
D.~Malkhi and M.~Reiter, ``Byzantine quorum systems,'' \emph{Distributed
  Computing}, vol.~11, no.~4, pp. 203--213, 1998.

\bibitem{latencycost}
H.~Scalability, ``{Latency is everywhere and it casts you sales - How to crush
  it},'' 2009,
  {http://highscalability.com/latency-everywhere-and-it-costs-you-sales-how-crush-it}.

\bibitem{Bailis2012}
P.~Bailis, S.~Venkataraman, M.~J. Franklin, J.~M. Hellerstein, and I.~Stoica,
  ``{Probabilistically Bounded Staleness for Practical Partial Quorums.}''
  \emph{Proceeding of VLDB}, 2012.

\bibitem{Dynamo}
G.~DeCandia, D.~Hastorun, M.~Jampani, G.~Kakulapati, A.~Lakshman, A.~Pilchin,
  S.~Sivasubramanian, P.~Vosshall, and W.~Vogels, ``Dynamo: amazon's highly
  available key-value store,'' \emph{ACM SIGOPS operating systems review},
  vol.~41, no.~6, pp. 205--220, 2007.

\bibitem{Cassandra}
A.~Lakshman and P.~Malik, ``Cassandra: A decentralized structured storage
  system,'' \emph{SIGOPS Oper. Syst. Rev.}, vol.~44, no.~2, Apr. 2010.

\bibitem{Golab2011}
W.~Golab, X.~Li, and M.~A. Shah, ``{Analyzing Consistency Properties for Fun
  and Profit},'' in \emph{ACM SIGACT-SIGOPS Symposium on Principles of
  Distributed Computing(PODC)}, New York, NY, USA, 2011, pp. 197--206.

\bibitem{Rahman2017}
M.~R. Rahman, L.~Tseng, S.~Nguyen, I.~Gupta, and N.~Vaidya, ``{Characterizing
  and Adapting the Consistency-Latency Tradeoff in Distributed Key-Value
  Stores},'' \emph{ACM Transactions on Autonomous and Adaptive Systems},
  vol.~11, no.~4, pp. 1--36, 2017.

\bibitem{Kaul2012infocom}
S.~Kaul, R.~D. Yates, and M.~Gruteser, ``Real-time status: How often should one
  update?'' in \emph{Proc.\ INFOCOM}, Apr. 2012, pp. 2731--2735.

\bibitem{Costa2014}
M.~Costa, M.~Codreanu, and A.~Ephremides, ``{Age of information with packet
  management},'' in \emph{Proc. IEEE Int.\ Symp.\ Inform.\ Theory}, 2014, pp.
  1583--1587.

\bibitem{Huang2015}
L.~Huang and E.~Modiano, ``{Optimizing age-of-information in a multi-class
  queueing system},'' in \emph{Proc. IEEE Int.\ Symp.\ Inform.\ Theory}, Jun.
  2015, pp. 1681--1685.

\bibitem{Sun2016}
Y.~Sun, E.~Uysal-Biyikoglu, R.~Yates, C.~E. Koksal, and N.~B. Shroff, ``Update
  or wait: How to keep your data fresh,'' in \emph{Proc.\ INFOCOM}, 2016.

\bibitem{Najm2017}
E.~Najm, R.~D. Yates, and E.~Soljanin, ``Status updates through m/g/1/1 queues
  with harq,'' in \emph{Proc. IEEE Int.\ Symp.\ Inform.\ Theory}, 2017.

\bibitem{riak}
{Basho Riak}, {http://www.basho.com/products/{\#}riak}.

\bibitem{Yates2017}
R.~D. Yates, E.~Najm, E.~Soljanin, and J.~Zhong, ``Timely updates over an
  erasure channel,'' in \emph{Proc. IEEE Int.\ Symp.\ Inform.\ Theory}, 2017.

\bibitem{Bedewy2017}
A.~M. Bedewy, Y.~Sun, and N.~B. Shroff, ``Minimizing the age of the information
  through queues,'' \emph{arXiv preprint arXiv:1709.04956}, 2017.

\bibitem{Sang2017}
Y.~Sang, B.~Li, and B.~Ji, ``The power of waiting for more than one response in
  minimizing the age-of-information,'' \emph{arXiv preprint arXiv:1704.04848},
  2017.

\bibitem{Arnold2008}
B.~C. Arnold, N.~Balakrishnan, and H.~N. Nagaraja, \emph{A first course in
  order statistics}.\hskip 1em plus 0.5em minus 0.4em\relax SIAM, 2008.

\end{thebibliography}

\appendix
\subsubsection*{\bf Proof of Corollary \ref{thm:wr_approx}}

For shifted exponential r.v. $X$, 
\begin{align}
\frac{ \E{X^2_{\wqr:n}}}{\E{X_{\wqr:n}}} & = \E{X_{\wqr:n}} + \frac{ \Var{X_{\wqr:n}}}{\E{X_{\wqr:n}}} \nn
&= \E{X_{\wqr:n}}+\frac{H_{n^2}-H_{(n-\wqr)^2}}{2\lambda^2\cf + 2\lambda (H_n-H_{n-\wqr})}.
\end{align}
Note that the sequence $H_{n^2}$ is monotonically increasing and $\lim_{n\to\infty} H_{n^2} = \pi^2/6$, thus $H_{n^2}-H_{(n-\wqr)^2}$ is negligible and 
\begin{align}
\lim_{n\to\infty} \frac{\Var{X_{\wqr:n}}}{\E{X_{\wqr:n}}} = 0, \label{eqn:var_over_E}
\end{align}
It follows from \eqref{eqn:var_over_E} that 
\begin{align}
	\lim_{n\to\infty}  \frac{ \E{X^2_{\wqr:n}}}{\E{X_{\wqr:n}}} =  \E{X_{\wqr:n}}. \label{eqn:E2-E}
\end{align}
With large $n$, we also approximate the harmonic number by $H_i \approx \log i + \gamma $, thus
\begin{align}
	\E{X_{\wqr:n}} & \approx \cf + \frac{1}{\lambda} \left(\log n - \log(n-\wqr)\right) \nn
	& = \cf + \frac{1}{\lambda} \left(\log \frac{n}{n-\wqr} \right). \label{eqn:os1appr_exp}
\end{align} 
Note that this approximation only holds when $\wqr<n$.

Let's substitute \eqref{eqn:var_over_E} and \eqref{eqn:os1appr_exp} into Theorem \ref{thm:age_w_r}, and approximate the binomial coefficient by 
$\binom{n}{k}\approx \frac{n^k}{k!}$. 	
For $\wqr+\rqr>n$, Theorem \ref{thm:age_w_r} is then rewritten as
	\begin{subequations}
	\begin{align}
		\age & \approx \sum_{i=1}^{\wqr} \E{X_{i:n}} \frac{(n-i)^{\rqr-1}\rqr}{n^\rqr} + \frac{1}{2} \E{X_{\wqr:n}} \\
		& \approx \sum_{i=1}^{\wqr} \left( \frac{1}{\lambda} \log\left(\frac{n}{n-i}\right) + \cf \right) \frac{(n-i)^{\rqr-1}\rqr}{n^\rqr} \label{eqn:throwE1} \nn
		& \qquad \qquad + \frac{1}{2\lambda}\log\left(\frac{n}{n-\wqr}\right) + \frac{\cf}{2} \\
		& \approx r \int_{x=0}^{\alpha=\frac{\wqr}{n}} \left( \frac{1}{\lambda} \log\left(\frac{1}{1-x}\right) + \cf \right) (1-x)^{\rqr-1}  \mathrm{d}x \nn
		& \qquad \qquad + \frac{1}{2\lambda}\log\left(\frac{1}{1-\alpha}\right) + \frac{\cf}{2} \label{eqn:sum_to_int} \\
		& = \frac{\Big(1-(1-\alpha)^\rqr(1-\rqr\log(1-\alpha))\Big)}{\lambda\rqr} + \cf\left(1-(1-\alpha)^\rqr\right) \nn
		& \qquad \qquad + \frac{1}{2\lambda}\log\left(\frac{1}{1-\alpha}\right) + \frac{\cf}{2} \\
		& = \frac{1-2(1-\alpha)^\rqr}{2\lambda} \log\frac{1}{1-\alpha} \nn
		& \qquad \qquad + (1-(1-\alpha)^\rqr) (\cf+\frac{1}{\lambda\rqr}) + \frac{\cf}{2}. 
	\end{align}			
	\end{subequations}
	In \eqref{eqn:throwE1}, we use the limit in \eqref{eqn:E2-E} as an approximate.
	In \eqref{eqn:sum_to_int}, we denote $\alpha=\wqr/n$ and approximate the sum $\sum_{i=1}^{\wqr} f(i)$ by the integral $\int_{i=0}^\wqr f(i) \mathrm{d}i $. \vspace{1mm} \\ 
For $\wqr+\rqr\leq n$, 
	\begin{subequations}
	\begin{align}
		\age & \approx \sum_{i=1}^{\wqr} \E{X_{i:n}} \frac{(n-i)^{\rqr-1}\rqr}{n^\rqr-(n-\wqr)^\rqr} \nn
		& \quad + \frac{n^\rqr+(n-\wqr)^\rqr}{2(n^\rqr-(n-\wqr)^\rqr)} \E{X_{\wqr:n}} \label{eqn:throwE2} \\
		& \approx \sum_{i=1}^{\wqr} \left( \frac{1}{\lambda} \log\left(\frac{n}{n-i}\right) + \cf \right) \frac{(n-i)^{\rqr-1}\rqr}{n^\rqr-(n-\wqr)^\rqr} \nn
		& \quad + \frac{n^\rqr+(n-\wqr)^\rqr}{2(n^\rqr-(n-\wqr)^\rqr)} \left( \frac{1}{\lambda}\log\left(\frac{n}{n-\wqr}\right) + \cf \right) \\
		& \approx \int_{x=0}^{\alpha} \left( \frac{1}{\lambda} \log\left(\frac{1}{1-x}\right) + \cf \right) \frac{(1-x)^{\rqr-1}\rqr}{1-(1-\alpha)^\rqr}  \mathrm{d}x \nn
		& \quad + \frac{1+(1-\alpha)^\rqr}{2(1-(1-\alpha)^\rqr)} \left( \frac{1}{\lambda}\log\left(\frac{1}{1-\alpha}\right) + \cf \right) \\	
		& = \frac{1}{\lambda\cf} - \frac{(1-\alpha)^\rqr}{\lambda(1-(1-\alpha)^\rqr)}  \log \left( \frac{1}{1-\alpha} \right) + c \nn
		& \quad +  \frac{(1+(1-\alpha)^\rqr)}{2\lambda(1-(1-\alpha)^\rqr)}  \log \left( \frac{1}{1-\alpha} \right) + \frac{\cf(1+(1-\alpha)^\rqr)}{2(1-(1-\alpha)^\rqr)} \\
		& = \frac{1}{\lambda \rqr} + \frac{1}{2\lambda}\log \left( \frac{1}{1-\alpha} \right) + c + \frac{c(1+(1-\alpha)^\rqr)}{2(1-(1-\alpha)^\rqr)}.
	\end{align}	
	\end{subequations}	
To obtain \eqref{eqn:throwE2}, we use the limit in \eqref{eqn:var_over_E} as an approximate and substitute it back to Theorem \ref{thm:age_w_r}. To simplify the expression we further denote $\beta=1-\alpha$ to complete the proof.
\vspace{1mm}
\subsubsection*{\bf Proof of Corollary \ref{thm:wr_optimal}}
We first prove by contradiction that the optimal $\beta$ doesn't fall into strict quorum region. For $\wqr+\rqr>n$, taking the derivative of the approximation 1) in corollary \ref{thm:wr_approx} gives
\begin{align}
	\frac{\mathrm{d} \age}{\mathrm{d} \beta} & =
	\frac{\beta^{\rqr-1}(\lambda\cf\rqr-\rqr\log\beta)}{\lambda}. \label{eqn:deriv1}
\end{align}
Thus we have the optimal $\beta^* = e^{\lambda\cf}$ by setting \eqref{eqn:deriv1} to zero. Since $\lambda$ and $\cf$ are positive, $e^{\lambda\cf}>1$ contradicts $\beta\in(0,1)$.
For non-strict quorum $\wqr+\rqr\leq n$, we let the derivative of the approximation 2) to be zero, i.e.,
\begin{align}
	\frac{\mathrm{d} \age}{\mathrm{d} \beta} & = \frac{\beta^{2\rqr} - 2(\lambda\cf\rqr+1)\beta^\rqr +1}{\lambda\beta(\beta^\rqr-1)} = 0
\end{align}
Since $\beta\in(0,1)$, it is equivalent that
\begin{align}
	\beta^{2\rqr} - 2(\lambda\cf\rqr+1)\beta^\rqr +1 & = 0. \label{eqn:deriv}
\end{align}
We define $\omega=\beta^\rqr$, and the solution to \eqref{eqn:deriv} is given by
\begin{align}
	\omega^* = (\lambda\cf\rqr+1)-\sqrt{(\lambda\cf\rqr+1)^2-1}.
\end{align}

\end{document}